\newcommand{\Alg}{Alg.}
\newcommand{\Fig}{Fig.}
\newcommand{\Sec}{Sec.}
\newcommand{\Tab}{Tab.}
\theoremstyle{plain}
\newtheorem{thm}{Theorem}[section]
\newtheorem{ex}{Example}[section]
\newtheorem{defn}{Definition}[section]
\newtheorem{lem}[thm]{Lemma}
\newcommand{\TRUE}{\ensuremath{\mathbf{true}}\xspace}
\newcommand{\FALSE}{\ensuremath{\mathbf{false}}\xspace}
\newcommand{\FOL}{\ensuremath{\mathsf{FOL}}\xspace}
\newcommand{\FV}{\ensuremath{\mathop{FV}}\xspace}
\newcommand{\Vars}{\ensuremath{\mathop{Vars}}\xspace}
\newcommand{\BV}{\ensuremath{\mathsf{FOL}}\xspace}
\newcommand{\Stmts}{\ensuremath{\mathsf{Stmts}}\xspace}
\newcommand{\PBRepair}{\textsf{PBRepair}\xspace}
\def\xskip{\hskip 7pt plus 3pt minus 4pt}
\newdimen\algindent
\newif\ifitempar \itempartrue
\def\algindentset#1{\setbox0\hbox{{\bf #1.\kern.25em}}\algindent=\wd0\relax}
\def\algbegin #1 #2{\algindentset{#21}\alg #1 #2}
\def\aalgbegin #1 #2{\algindentset{#211}\alg #1 #2}
\def\alg#1(#2). {\medbreak
  \noindent{\bf#1}{({\it#2\/})}.\xskip\ignorespaces}
\def\algstep#1.{\ifitempar\smallskip\noindent\else\itempartrue
  \hskip-\parindent\fi
  \hbox to\algindent{\bf\hfil #1.\kern.25em}%
  \hangindent=\algindent\hangafter=1\ignorespaces}
\def\slug{\hbox{\kern1.5pt\vrule width2.5pt height6pt depth1.5pt\kern1.5pt}}
\def\PY@reset{\let\PY@it=\relax \let\PY@bf=\relax%
    \let\PY@ul=\relax \let\PY@tc=\relax%
    \let\PY@bc=\relax \let\PY@ff=\relax}
\def\PY@tok#1{\csname PY@tok@#1\endcsname}
\def\PY@toks#1+{\ifx\relax#1\empty\else%
    \PY@tok{#1}\expandafter\PY@toks\fi}
\def\PY@do#1{\PY@bc{\PY@tc{\PY@ul{%
    \PY@it{\PY@bf{\PY@ff{#1}}}}}}}
\def\PY#1#2{\PY@reset\PY@toks#1+\relax+\PY@do{#2}}
\def\csname PY@tok@k\endcsname{\def\PY@tc##1{\textcolor[rgb]{0.00,0.50,0.00}{##1}}}
\def\csname PY@tok@c\endcsname{\let\PY@it=\textit\def\PY@tc##1{\textcolor[rgb]{0.25,0.50,0.50}{##1}}}
\title{Path-Based Program Repair}
\author{Heinz Riener$^1$
\institute{$^1$Institute of Computer Science\\
University of Bremen, Germany}
\email{\{hriener,rehlers,fey\}@cs.uni-bremen.de$\!\!$}
\and R{\"u}diger Ehlers$^{1,2}$
\institute{$^2$DFKI GmbH\\
Bremen, Germany}
\and
G{\"o}rschwin Fey$^{1,3}$
\institute{$^3$Institute of Space Systems,\\
German Aerospace Center, Germany}
\email{goerschwin.fey@dlr.de}
}
\begin{document}
\maketitle

\begin{abstract}
We propose a path-based approach to program repair for imperative
programs. Our repair framework takes as input a faulty program, a
logic specification that is refuted, and a hint where the fault may be
located.  An iterative abstraction refinement loop is then used to
repair the program: in each iteration, the faulty program part is
re-synthesized considering a symbolic counterexample, where the
control-flow is kept concrete but the data-flow is symbolic.  The
appeal of the idea is two-fold: 1)~the approach lazily considers
candidate repairs and 2)~the repairs are directly derived from the
logic specification. In contrast to prior work, our approach is
complete for programs with finitely many control-flow paths, i.e., the
program is repaired if and only if it can be repaired at the specified
fault location.  Initial results for small programs indicate that the
approach is useful for debugging programs in practice.
\end{abstract}

\section{Introduction}
Debugging is one of the most frequent and challenging activities in
software development.  In order to fix a faulty program without
introducing new bugs or subtle corner cases, an in-depth understanding
of the source code is required.  Error hints and counterexamples
produced by static analysis or model checking tools are only of little
help: they typically report a symptom of a failure but do not point to
the actual cause or provide a repair for a program.  Manually
correcting a faulty program based on this information is hard and
often becomes an iterative trial-and-error process driven by a
developer's intuition.

Automatic program repair techniques aim at reducing this manual burden
by utilizing a refuted logic specification to automatically compute a
repair for a faulty program.  Existing approaches mainly suffer from
two problems: 1)~they do not scale well to large programs or 2)~they
rely on structural restrictions of the considered repairs.  The two
problems are orthogonal.  Scalability issues originate from the fact
that correctness of the entire program is modeled.  Structural
restrictions allow for the enumeration of potential repairs and make
the repairs more readable \cite{Koenig11}.  The choice of the
``right'' structure for a repair, however, is left to the user or
guided by a brute-force search.  Bad choices cause an exclusion of
suitable repairs.

In this paper, we take program repair one step further and address
these two problems with a novel approach, called \emph{path-based
program repair}.  The approach combines symbolic path reasoning and
software synthesis.  We present a repair framework to automatically
correct a faulty imperative program based on an iterative abstraction
refinement loop, assuming that the location of the fault is known.
For instance, the location may be guessed by a user or computed using
a fault localization approach, e.g.,~\cite{Jose11,Riener12}.

In each iteration of the loop, a model checker computes a symbolic
counterexample.  Symbolic counterexamples keep the data-flow symbolic
but the control-flow concrete.  Symbolic path reasoning then infers
verification conditions in the local context of the faulty program
part.  The inferred verification conditions are used to formulate a
synthesis problem.  The faulty program part is then re-synthesized in
order to exclude all symbolic counterexamples found so far.
Consequently, the synthesized candidate repair corrects the program
with respect to the considered symbolic counterexamples.  The loop
terminates when no further symbolic counterexamples are found and is
guaranteed to converge for programs with finitely many control-flow
paths (provided that the model checker and the synthesis procedure do
not diverge).

The appeal of the idea is two-fold: 1)~our approach lazily repairs a
faulty program by inferring verification conditions from
counterexamples utilizing symbolic path reasoning and 2)~the repairs
are derived from the refuted logic specification using software
synthesis.  Symbolic path reasoning, on the one hand, has been
effectively used to cope with scalability.  Software synthesis, on the
other hand, does not restrict candidate repairs to have a predefined
structure~\cite{Koenig11,Bloem12}, such as linear expressions over
program variables.

Path-based program repair establishes a framework for fixing faulty
programs, building on existing ideas, but tunes them to specifically
address program repair.  Na\"ively combining prior work is
problematic: applying predicate abstraction~\cite{Flanag02} leads to
coarse abstractions that are just good enough to either verify the
program or refute its correctness along one of its executions.  Thus,
the abstraction is insufficient to subsequently find a ``good'' repair
for the program, which should work for most, if not all, cases.
Recent software synthesis approaches, e.g., those for the
\emph{Syntax-Guided Synthesis}~(SyGuS)~\cite{Alur13} problem, cannot
decide realizability, which is crucial for termination in case the
initially given fault locations cannot be repaired.

\textbf{Contribution.} The contribution of the paper can be summarized as follows:
\begin{enumerate}
\item an iterative abstraction refinement approach for program repair
combining symbolic path reasoning and software synthesis, called
path-based program repair;
\item a prototype implementation of the repair framework utilizing
domain finitizing and synthesis based on {\em Binary Decision
Diagrams} (BDD), which allows deciding realizability;
\item initial experimental results for our prototype on a small ANSI-C
program.
\end{enumerate}

The remainder of the paper is structured as follows: in
\Sec~\ref{sec:bg}, we describe the necessary background and in
\Sec~\ref{sec:pbrepair}, we present path-based program repair.
\Sec~\ref{sec:eval} is dedicated to our prototype implementation
utilizing domain finitizing and BDD-based synthesis. We also give
experimental results for a small ANSI-C program.
\Sec~\ref{sec:conclusion} concludes the paper.

\section{Background}\label{sec:bg}
\subsection{Program and Specification}\label{ssec:program}
We focus on sequential, finite-state systems described in a high-level
programming language like ANSI-C.  Let~$\Stmts$ be the set of all
statements, a program corresponds to a finite-state automaton
over~$\Stmts$, called \emph{program automaton}, with control locations
as nodes and program operations as edges.  Without loss of generality,
we assume that each program automaton has a distinguished entry node
and a distinguished exit node denoting the program's entry and the
program's exit.

Our approach to program repair is dedicated to static analysis of
individual \emph{control-flow paths} of a program.  A control-flow
path is a consecutive sequence of nodes starting at the entry node and
ending at the exit node, i.e., a word in the language of the program
automaton, which corresponds to a terminating execution of the program
and respects the semantics of the statements.  Loops are unrolled and
decisions at branching points in the program are modeled, e.g.,
\texttt{if} or \texttt{while} statements are replaced with respective
assumption statements.  More formally, a control-flow path $\pi = s_1
s_2 \cdots s_n$ is a sequence of side-effect free statements given in
\emph{Static Single Assignment}~(SSA) form, where each~$s_i$ is either
an \emph{assumption statement} or an \emph{assignment statement}.
Along a fixed control-flow path, computing SSA form is straightforward
because the costly placement of $\phi$-nodes to assemble the
control-flow from different control-flow paths is not necessary.  An
assignment statement is of form \mbox{\tt v := e}, where~$v$ is a
program variable and~$e$ is an expression over program variables and
constants.  An assumption statement is of form \mbox{\tt assume(c)},
where~$c$ is a condition over the program variables and constants.  We
assume that the concatenation operation $\cdot$ is defined for a
control-flow path $\pi = s_1 s_2 \dots s_n$ in the usual form such
that the control-flow path can be represented as $\pi = \pi_A \cdot
s_n$, $\pi = s_1 \cdot \pi_A$, or $\pi = \pi_A \cdot \pi_B$.

A (program) state~$\sigma$ is a valuation of all program variables,
i.e., a mapping from the program variables to values within the
respective domains.  The specification is given as a precondition and
a postcondition.  The precondition defines the set of program states
initially possible at the program's entry.  The postcondition defines
the set of program states allowed at the program's exit.  We say that
a program~$P$ is \emph{correct}~\emph{if and only if}~(iff) all
executions starting in a state~$\sigma$ specified by the precondition
reach a state~$\sigma^\prime$ on termination that fulfills the
postcondition.  Otherwise we say that $P$ is \emph{faulty}.

In the following, sets of states are symbolically represented
utilizing \emph{First-Order Logic}~(\FOL) formul\ae{}.  Let~$\phi \in
\FOL$, we write~$\Vars(\phi)$ and~$\FV(\phi)$ to denote the variables
and the free variables of~$\phi$, respectively.  In several parts of
the descriptions, we explicitly show that a \FOL formula~$\phi$
depends on a specific variable~$v$ by writing $\phi[v]$ and generalize
this notation to show the dependence on a list~$\overline{v}$ of
variables by writing $\phi[\overline{v}]$.  As usual, we call a \FOL
formula $\phi$ \emph{satisfiable} iff an assignment to $Vars(\phi)$
exists that makes the formula~\TRUE and \emph{unsatisfiable}
otherwise.  Moreover, we say a \FOL formula $\phi$ is \emph{valid} iff
$\phi$ is equivalent to \TRUE, i.e., $\phi$ is \TRUE for all
assignments to $\Vars(\phi)$.  Lastly,~$\phi[x/y]$ denotes~$\phi$
where each free occurrence of variable~$x$ is replaced by
variable~$y$.

\subsection{Symbolic Path Reasoning}\label{ssec:propagation}
\begin{figure}[t]
  \centering
  \scriptsize
  \begin{minipage}{0.49\columnwidth}
    \centering
    \begin{Verbatim}[commandchars=\\\{\},gobble=6,codes={\catcode`\$=3\catcode`\^=7\catcode`\_=8}]
      \PY{c+c1}{// $\mathbf{\varphi :\Leftrightarrow \{x=0\}}$}
      \PY{l+m+mf}{1.} \PY{k}{assume}\PY{p}{(}\PY{n}{x} \PY{o}{\PYZlt{}} \PY{l+m+mi}{2}\PY{p}{)}\PY{p}{;}
      \PY{c+c1}{// $\{x=0 \wedge x<2\} \Leftrightarrow \{x=0\}$}
      \PY{l+m+mf}{2.} \PY{n}{x} \PY{o}{=} \PY{n}{x} \PY{o}{+} \PY{l+m+mi}{1}\PY{p}{;}
      \PY{c+c1}{// $\{x=1\}$}
      \PY{l+m+mf}{3.} \PY{k}{assume}\PY{p}{(}\PY{n}{x} \PY{o}{\PYZlt{}} \PY{l+m+mi}{2}\PY{p}{)}\PY{p}{;}
      \PY{c+c1}{// $\{x=1 \wedge x<2\} \Leftrightarrow \{x=1\}$}
      \PY{l+m+mf}{4.} \PY{n}{x} \PY{o}{=} \PY{n}{x} \PY{o}{+} \PY{l+m+mi}{1}\PY{p}{;}
      \PY{c+c1}{// $\{x=2\}$}
      \PY{l+m+mf}{5.} \PY{k}{assume}\PY{p}{(}\PY{o}{!}\PY{p}{(}\PY{n}{x} \PY{o}{\PYZlt{}} \PY{l+m+mi}{2}\PY{p}{)}\PY{p}{)}\PY{p}{;}
      \PY{c+c1}{// $\{x=2 \wedge \neg(x < 2)\} \Leftrightarrow \{x=2\}$}
    \end{Verbatim}
  \end{minipage}
  \begin{minipage}{0.49\columnwidth}
    \centering
    \begin{Verbatim}[commandchars=\\\{\},gobble=6,codes={\catcode`\$=3\catcode`\^=7\catcode`\_=8}]
      \PY{c+c1}{// $\{true\}$}
      \PY{l+m+mf}{1.} \PY{k}{assume}\PY{p}{(}\PY{n}{x} \PY{o}{\PYZlt{}} \PY{l+m+mi}{2}\PY{p}{)}\PY{p}{;}
      \PY{c+c1}{// $\{true\}$}
      \PY{l+m+mf}{2.} \PY{n}{x} \PY{o}{=} \PY{n}{x} \PY{o}{+} \PY{l+m+mi}{1}\PY{p}{;}
      \PY{c+c1}{// $\{(x<2) \rightarrow x\leq{}1\} \Leftrightarrow \{true\}$}
      \PY{l+m+mf}{3.} \PY{k}{assume}\PY{p}{(}\PY{n}{x} \PY{o}{\PYZlt{}} \PY{l+m+mi}{2}\PY{p}{)}\PY{p}{;}
      \PY{c+c1}{// $\{x+1 \leq 2\} \Leftrightarrow \{x\leq{}1\}$}
      \PY{l+m+mf}{4.} \PY{n}{x} \PY{o}{=} \PY{n}{x} \PY{o}{+} \PY{l+m+mi}{1}\PY{p}{;}
      \PY{c+c1}{// $\{\neg(x<2) \rightarrow x=2\} \Leftrightarrow \{x\leq{}2\}$}
      \PY{l+m+mf}{5.} \PY{k}{assume}\PY{p}{(}\PY{o}{!}\PY{p}{(}\PY{n}{x} \PY{o}{\PYZlt{}} \PY{l+m+mi}{2}\PY{p}{)}\PY{p}{)}\PY{p}{;}
      \PY{c+c1}{// $\mathbf{\psi :\Leftrightarrow \{x=2\}}$}
    \end{Verbatim}
  \end{minipage}
  \caption{Predicate propagation along a control-flow path.}
  \label{Fig:Propagation}
\end{figure}

We use Floyd/Hoare style computation to propagate predicates along a
control-flow path.  This is done by applying standard predicate
transformers like weakest precondition and strongest postcondition to
control-flow paths.  A predicate transformer is a function
$\mathop{pt} : \FOL \times \Stmts \rightarrow \FOL$ that maps a
formula $\phi \in \FOL$ and a statement $s \in \Stmts$ to a formula
$\mathop{pt}(\phi,s) \in \FOL$.

\begin{defn}
Given a statement $s \in \Stmts$ executed on state $\sigma$ to produce
state $\sigma^\prime$ and a predicate $\phi \in \FOL$, the weakest
precondition transformer $wp$ and the strongest postcondition
transformer $sp$ compute
\begin{itemize}
\item the weakest predicate~$wp(\phi,s) \in \FOL$, called \emph{weakest
precondition}, that guarantees \mbox{$\sigma \models wp(\phi,s)$} if
\mbox{$\sigma^\prime \in \phi$} and
\item the strongest predicate~$sp(\phi,s) \in \FOL$, called
\emph{strongest postcondition}, that guarantees \mbox{$\sigma^\prime
\models sp(\phi,s)$} if \mbox{$\sigma \in \phi$}, respectively.
\end{itemize}
\end{defn}

Let $\phi \in \FOL$ and $s \in \Stmts$, we define the usual mechanic
rules to compute the weakest precondition and the strongest
postcondition for the two types of statements that occur in a
control-flow path, respectively,
\begin{align*}
wp(\phi,\mbox{assume(c)}) &\Leftrightarrow c \rightarrow \phi\\
wp(\phi,\mbox{v := e}) &\Leftrightarrow \forall v^\prime . (v^\prime = e \rightarrow \phi[v/v^\prime]) \Leftrightarrow \phi[v/e]\\
sp(\phi,\mbox{assume(c)}) &\Leftrightarrow c \wedge \phi\\
sp(\phi,\mbox{v := e}) &\Leftrightarrow \exists v^\prime . (v = e[v/v^\prime] \wedge \phi[v/v^\prime]),
\end{align*}
and naturally generalize them to sequences $s_1 s_2 \cdots s_n$ of statements,
\begin{align*}
wp(\phi,s_1 s_2 \dots s_n) &\Leftrightarrow wp(wp(\phi,s_n), s_1 s_2 \cdots s_{n-1})\\
sp(\phi,s_1 s_2 \dots s_n) &\Leftrightarrow sp(sp(\phi,s_1), s_2 s_3 \cdots s_n).
\end{align*}

Since loops are already unrolled, no loop invariants have to be found,
and thus computing strongest postconditions and weakest preconditions
along control-flow paths is decidable if the logic in use admits
quantifier elimination.  Moreover, note that computing the weakest
precondition of an assignment statement amounts to replacing a
variable by an expression. Thus, the application of costly quantifier
elimination procedures is not necessary in this case.

We refer to the application of $wp$ and $sp$ to a given control-flow
path $\pi = s_1s_2 \dots s_n$, i.e., a sequence of statements that
respect the semantics of a program, and to a \FOL formula as
\emph{backward propagation} and
\emph{forward propagation}, respectively.

\begin{ex}
In Fig.~\ref{Fig:Propagation}, we apply forward propagation and
backward propagation, respectively, to a control-flow path~$s_1 s_2
\dots s_5$.  Forward propagation (on the left) uses the
precondition~$\varphi \Leftrightarrow \{x=0\}$ and the backward
propagation (on the right) uses the postcondition~$\psi
\Leftrightarrow \{x=2 \}$.
\end{ex}

\begin{defn}\label{def:hoare}
Let~$\pi$ be a control-flow path with precondition~$\varphi$ and
postcondition~$\psi$.  We say that a Hoare triple~$\{\varphi\} \pi
\{\psi\}$ holds iff the two equal conditions $\varphi \rightarrow
wp(\psi,\pi)$ and $sp(\varphi,\pi) \rightarrow \psi$ are valid.
Otherwise, we call $\pi$ a (symbolic) \emph{counterexample} for
$\varphi$ and $\psi$.
\end{defn}

This definition differs from the standard connotation of a
counterexample given by a concrete input assignment returned by a
model checking procedure.  However, a symbolic counterexample in our
sense can directly be determined after model checking by interpreting
the program on the concrete input assignment and logging the
corresponding control-flow path; thus, control-flow is concrete but
data-flow symbolic.

\subsection{Software Synthesis}\label{ssec:synthesis}
We use synthesis to repair faulty programs and treat a synthesis
procedure as a black box that derives program terms from a logic
specification.  The logic specification is a predicate
$\phi[\overline{x},\overline{y}] \in \FOL$, where $\overline{x}$ is a
set of uncontrollable variables, $\overline{y}$ is a set of
controllable variables, and $\FV(\phi) \subseteq \overline{x} \cup
\overline{y}$.
A synthesis procedure computes terms over $\overline{x}$ such that
replacing all occurrences of $\overline{y}$ in $\phi$ by their
respective term yields a valid formula.

\begin{defn}\label{def:synthesis}
A (software) synthesis procedure computes terms~$\overline{T}$ from a
given predicate~$\phi \in \BV$ and variables $\overline{x} \subseteq
\FV(\phi)$ such that
\begin{align}
\forall \overline{x} . \phi[\overline{y}/\overline{T}]
\end{align}
is valid, where $\overline{y} = FV(\phi) \backslash \overline{x}$.

A synthesis procedure may choose not to compute a term or may not
terminate.  If this can only happen if such a term replacement does
not exist, i.e., if
\begin{equation*}
\forall \overline{x} \exists \overline{y} . \phi
\end{equation*}
is valid,
we call a synthesis procedure \emph{complete}. We also say
that a complete synthesis procedure which always terminates is able to
\emph{detect unrealizability}.
\end{defn}

Note that this formulation defines a \emph{software synthesis}
problem~\cite{Kuncak10,Hamza10}, where we search for a piece of
terminating code that satisfies a given logic specification. This is
in contrast to works on \emph{reactive synthesis}, where a
finite-state machine is to be synthesized that executes for an
unbounded duration of time and satisfies a specification in some
temporal logic, i.e., that reasons about the behavior of the
finite-state machine over time.

Synthesis corresponds to quantifier elimination and synthesis
procedures have been provided for relations expressed in different
decidable logics, e.g., Boolean logic, linear arithmetic and
sets~\cite{Kuncak13}, unbounded bit-vectors~\cite{Hamza10}, term
algebras and the theory of integer-indexed arrays with symbolic bounds
on index ranges~\cite{Jacobs13}.

\section{Path-Based Program Repair}\label{sec:pbrepair}

\subsection{The \PBRepair{} Framework}
The overall \PBRepair{} approach is described as pseudo code shown in
\Alg~\ref{alg:pbrepair}.  Additionally, \Tab~\ref{tab:pbrepair} gives
a description of the main components used in the algorithm, and
\Fig~\ref{fig:pbrepair} describes the interaction between them.

The input to \PBRepair{} is a faulty program~$P$, a logic
specification given as a pair of a precondition and a postcondition
$(\varphi,\psi)$, and an fault region~$e$ to be repaired, where $e$
contains assignment statements to a set of
variables~$\overline{v}$. The algorithm returns a repaired
program~$P^\prime$ which is a copy of~$P$ but in which the code within
the fault region~$e$ has been replaced by assignment statements to the
variables~$\overline{v}$ such that~$P^\prime$ is correct with respect
to~$\varphi$ and~$\psi$.

The algorithm can be seen as an iterative abstraction refinement loop
guided by the counterexamples provided by a model checker.  The
algorithm maintains a set~$\Pi$ of counterexamples and modifies a
copy~$P^\prime$ of the program (line~1).  In each iteration, three
steps are performed: firstly, the program is model checked with
respect to its logic specification (line~2).  If verification
succeeds, then the algorithm terminates with the currently considered
program~$P^\prime$ as output.  Otherwise, a (symbolic)
counterexample~$\pi$ is provided by the model checker and added
to~$\Pi$ (line~3).

Secondly, a synthesis procedure is invoked with the predicate~$\Phi
\in \FOL$ and a set of variables~$\overline{y} \subseteq Var(\phi)$ to
be synthesized (line~5).  The predicate~$\Phi \in \BV$ accumulates the
verification conditions by propagating the precondition~$\varphi$ and
the postcondition~$\psi$ to the local context of the fault region~$e$
for all counterexamples $\pi~\in~\Pi$~(line 4).  For a counterexample
$\pi= \pi_A \cdot e \cdot \pi_B$ we use the strongest postcondition
transformer~$\mathit{sp}$ to propagate~$\varphi$ forward and the
weakest precondition transformer~$\mathit{wp}$ to propagate~$\psi$
backward until reaching the fault region.  The variables
~$\overline{y}$ are fresh variables replacing~$\overline{v}$
in~$\mathit{wp}(\psi,\pi_B)$ and otherwise do not occur in~$\Phi$.
The terms~$\overline{T}$ produced as a result of synthesis are a
repair for the fault region~$e$ in program~$P$ considering all
counterexamples in~$\Pi$.  If $\Phi$ is unrealizable~(line 6-8), then
the algorithm terminates with an error indicating that the program
cannot be repaired in the considered fault region~$e$.  For instance,
this may happen if the initial fault region~$e$ has been provided by a
user or an unsound fault localization algorithm.  Otherwise, synthesis
yields a list of terms~$\overline{T}$ that are used to repair the
program.

Thirdly, if the terms~$\overline{T}$ could be computed~(line~5),
assignment statements to the variables in~$\overline{v}$ are generated
as a repair to the program in the syntax of the programming language
in use. In~$P^\prime$, the assignment statements replace the
statements within the fault region~(line 9).  The program~$P^\prime$
is correct (by construction) with respect to $\varphi$, $\psi$, and
all counterexamples $\pi \in \Pi$.  The algorithm loops until a
correctly repaired program is found or the non-existence of a repair
is detected.

\begin{algorithm}[t]
\DontPrintSemicolon
\SetKwInOut{Input}{input}
\SetKwInOut{Output}{output}
\Input{faulty program~$P$, precondition~$\varphi$, postcondition~$\psi$, fault region~$e$}
\Output{repaired program~$P^\prime$, such that $P^\prime$ is correct}
\SetKwFunction{ModelCheck}{\textsf{ModelCheck}}
\SetKwFunction{Synthesize}{\textsf{Synthesize}}
\SetKwFunction{ApplyRepair}{\textsf{ApplyRepair}}
\SetKw{throw}{throw}
\SetKwIF{Let}{ThenLet}{Else}{let}{}{error}{error}{}
\SetKwFunction{pre}{$\mathsf{Pre}_e$}
\SetKwFunction{post}{$\mathsf{Post}_e$}
\SetKwProg{subf}{fun}{}{end}
$\Pi := \emptyset$, $P^\prime := P$\;
\While{$\pi := \ModelCheck(P^\prime,\varphi,\psi)$}{
  $\Pi := \Pi \cup \{\pi\}$\;
  \Let{$\Phi := \bigwedge\limits_{\pi_A\cdot e \cdot \pi_B\in\Pi} (\mathsf{sp}(\varphi,\pi_A) \Rightarrow (\mathsf{wp}(\psi,\pi_B)[\overline{v}/\overline{y}]))
$}{
    $\overline{T} := \Synthesize(\Phi,\overline{y})$\;
    \If{\textnormal{Unrealizable}}{\textnormal{\throw{Unable to repair in fault region~$e$.}}}
  }
  $P^\prime := \ApplyRepair(P^\prime,e,\overline{T})$\;
}
\Return $P^\prime$\;
\caption{\PBRepair}
\label{alg:pbrepair}
\end{algorithm}

\begin{figure}[t]
\centering
\includegraphics[scale=0.38]{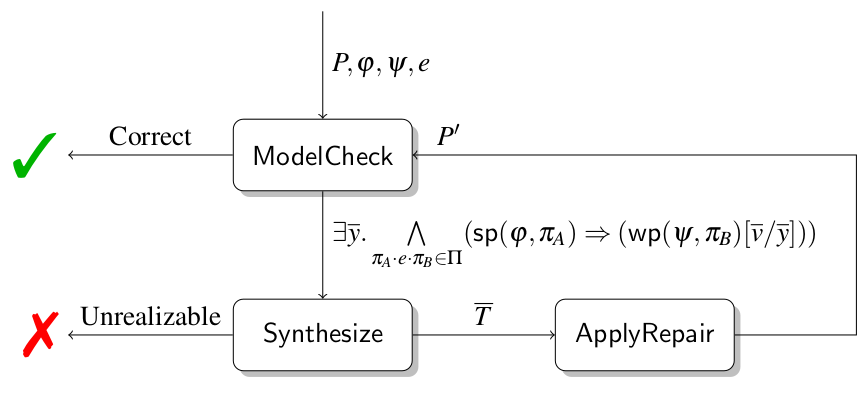}
\caption{\PBRepair{}: high-level overview of path-based program repair.}
\label{fig:pbrepair}
\end{figure}

\begin{table}
\centering
\begin{tabular}{lp{13.3cm}}
\toprule
\bf Name & \textsf{ModelCheck} \\
\midrule
\it Input & Source code of a program~$P$, a precondition~$\varphi$,
and a postcondition~$\psi$. \\
\it Output & A counterexample if~$P$ is faulty and \FALSE otherwise.\\
  \it Description & Model checks the source code of~$P$ assuming~$\varphi$ and asserting~$\psi$ and returns a (symbolic)
counterexample $\pi$.\\
\midrule
\bf Name & \textsf{Synthesize} \\
\midrule
\it Input & A predicate $\phi$ and a vector $\overline{x} \subseteq \FV(\phi)$ of controllable variables.\\
\it Output & A vector $\overline{T}$ of terms if $\phi$ is realizable with respect to $\phi$ or throws an exception.\\
\it Description & Computes a vector $\overline{T}$ of program terms on termination, such that $\forall \overline{x} \exists \overline{y} . \phi \rightarrow \forall \overline{x} . \phi[\overline{y}/\overline{T}]$ becomes valid.  If $\forall \overline{x} \exists \overline{y} . \phi$ is not valid, i.e., is unsatisfiable, an exception is thrown which indicates that the synthesis problem is unrealizable.\\
\midrule
\bf Name & \textsf{ApplyRepair} \\
\midrule
\it Input & Source code of program~$P$, a fault region~$e$, and program terms~$\overline{T}$. \\
\it Output & A repaired copy $P^\prime$ of program~$P$, where all changes are applied to fault region~$e$. \\
\it Description & Transforms the program terms~$\overline{T}$ into source code in the
programming language in use and replaces them for the right-hand side of the assignment statements in fault region~$e$ in $P^\prime$.\\
\bottomrule
\end{tabular}
\caption{A description of the main components used in \Alg~\ref{alg:pbrepair}.}
\label{tab:pbrepair}
\end{table}

\subsection{Correctness and Termination}

\begin{lem}\label{ref:iter}
Let~$\pi = \pi_A \cdot e \cdot \pi_B$ be a counterexample to
precondition~$\varphi$ and postcondition~$\psi$, where $e$ is a list
of assignment statements ${\tt v_i := e_i}$ that is known to be faulty
and $\pi_A$ and $\pi_B$ are known to be correct, then $\{\varphi\}
\pi_A \cdot r \cdot \pi_B \{\psi\}$ holds for the assignment
statements~$r$ of form $r:{\tt v_1 := T_1 \ldots v_n := T_n}$, where
$\overline{v} = [v_1, \ldots, v_n]$ and $\overline{T} = [T_1, \ldots,
T_n]$ is the result of performing synthesis on the specification $
sp(\varphi,\pi_{A}) \rightarrow
wp(\psi,\pi_{B})[\overline{v}/\overline{y}]$ for the output variables
$\overline{y}$.
\end{lem}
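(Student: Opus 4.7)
The plan is to unfold the Hoare triple $\{\varphi\} \pi_A \cdot r \cdot \pi_B \{\psi\}$ using Definition~\ref{def:hoare}, and then to chain the synthesis guarantee from Definition~\ref{def:synthesis} through the standard duality between forward and backward propagation. By Definition~\ref{def:hoare} it suffices to show that $\varphi \rightarrow wp(\psi, \pi_A \cdot r \cdot \pi_B)$ is valid. Using the generalization of $wp$ to sequences, this equals $\varphi \rightarrow wp(wp(wp(\psi,\pi_B),r),\pi_A)$, and by the equivalence built into Definition~\ref{def:hoare} ($\varphi \rightarrow wp(\phi,\pi_A)$ iff $sp(\varphi,\pi_A) \rightarrow \phi$), the problem reduces to proving that $sp(\varphi,\pi_A) \rightarrow wp(wp(\psi,\pi_B), r)$ is valid.

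Next I would unfold the inner $wp(\cdot, r)$. Since $r$ is the block ${\tt v_1 := T_1; \ldots; v_n := T_n}$, iterating the $wp$-rule for assignments and exploiting the fact that the $v_i$ are pairwise distinct and do not occur in any $T_j$ (because the $T_j$ are produced by synthesis over the uncontrollable variables only) yields $wp(wp(\psi,\pi_B), r) \equiv wp(\psi,\pi_B)[\overline{v}/\overline{T}]$. On the other hand, Definition~\ref{def:synthesis} guarantees that $\forall \overline{x}.\, \Phi[\overline{y}/\overline{T}]$ is valid for $\Phi \equiv sp(\varphi,\pi_A) \rightarrow wp(\psi,\pi_B)[\overline{v}/\overline{y}]$. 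Because $\overline{y}$ is fresh and appears in $\Phi$ only in the positions where it substituted $\overline{v}$ inside $wp(\psi,\pi_B)$, the substitution $[\overline{y}/\overline{T}]$ collapses $\Phi$ to exactly $sp(\varphi,\pi_A) \rightarrow wp(\psi,\pi_B)[\overline{v}/\overline{T}]$, closing the chain with the reduction from the previous paragraph.

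The main obstacle I expect is not in the logical structure, which is linear, but in the bookkeeping around variable freshness and the SSA discipline. Specifically, turning $wp(wp(\psi,\pi_B), r)$ into the simultaneous substitution $wp(\psi,\pi_B)[\overline{v}/\overline{T}]$ requires that the $v_i$ be distinct, that none of them occur in any $T_j$, and that $r$ truly behaves as a simultaneous assignment in the SSA sense rather than as a sequential one with intermediate reassignments of the same variable. All of these follow from the SSA convention of Section~\ref{ssec:program} together with the fact that synthesis returns terms over the uncontrollable variables only, but each such invariant has to be invoked explicitly. A secondary point that deserves care is that the $\forall \overline{x}$ in Definition~\ref{def:synthesis} is strong enough to yield plain validity of $\Phi[\overline{y}/\overline{T}]$, which holds because after the substitution the only remaining free variables of $\Phi$ are contained in $\overline{x}$, so validity and universal closure coincide.
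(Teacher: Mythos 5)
Your proof is correct and takes essentially the same route as the paper's: both reduce the claim to the validity of $sp(\varphi,\pi_A) \rightarrow wp(\psi,\pi_B)[\overline{v}/\overline{T}]$ and close by invoking the synthesis guarantee of Definition~\ref{def:synthesis}. The only difference is presentational --- the paper decomposes into three Hoare triples with intermediate assertions $\Gamma_1,\Gamma_2$ and then weakens/strengthens, whereas you compute $wp$ directly through the composed path and apply the $wp$/$sp$ duality once, which amounts to choosing $\Gamma_1 = sp(\varphi,\pi_A)$ and $\Gamma_2 = wp(\psi,\pi_B)$ from the start; your explicit attention to the SSA/freshness conditions needed to collapse $wp(\cdot,r)$ into a simultaneous substitution is a point the paper glosses over.
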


\begin{proof}
The control-flow path $\pi = \pi_A \cdot r \cdot \pi_B$ is correct
with respect to precondition~$\varphi$ and postcondition~$\psi$ iff
$\{\varphi\} \pi_A \cdot r \cdot \pi_B \{\psi\}$ holds.  Let
$\Gamma_1, \Gamma_2 \in \FOL$ be the verification conditions that hold
immediately before and immediately after the execution of the
statements~$r$, respectively, such that $\{\varphi\} \pi_A \cdot r
\cdot \pi_B \{\psi\}$ decomposes to the three Hoare triples
$H_1:~\{\varphi\} \pi_A \{\Gamma_1\}$, $H_2:~\{\Gamma_2\} \pi_B
\{\psi\}$, and $H_3:~\{\Gamma_1\} r \{\Gamma_2\}$.  Since $\pi_A$ and
$\pi_B$ are correct, $H_1$ and $H_2$ hold, and only $H_3$ has to be
proven.  The Hoare triple $H_3$ holds iff (i)~$\Gamma_1 \rightarrow
wp(\Gamma_2, r:{\tt v_1 := T_1 \ldots v_n := T_n})$ is valid
(Def~\ref{def:hoare}), which further simplifies to $\Gamma_1
\rightarrow \Gamma_2[\overline{v}/\overline{T}]$.  The Hoare triples
$H_1$ and $H_2$ hold iff (ii)~$sp(\varphi,\pi_A) \rightarrow \Gamma_1$
and (iii)~$\Gamma_2 \rightarrow wp(\psi,\pi_B)$ hold
(Def~\ref{def:hoare}), respectively.  We use (ii) to weaken the
left-hand side of (i) and (iii) to strengthen the right-hand side of
(i), and obtain (iv) $sp(\varphi,\pi_A) \rightarrow
wp(\psi,\pi_B)[\overline{v}/\overline{T}]$.  The term
replacements~$\overline{T}$ guarantee the validity of implication (iv)
(Def~\ref{def:synthesis}), and thus $H_3$ and $\{\varphi\} \pi
\{\psi\}$ hold.
\end{proof}

\begin{thm}
Let~$P$ be a program with finitely many control-flow paths,
$(\varphi,\psi)$ be a pair of a precondition and a postcondition, and
$e: {\tt v_1 = w_1; \ldots v_n := w_n}$ be a fault region in $P$, then
algorithm~$\PBRepair(P, \allowbreak \varphi, \allowbreak \psi,
\allowbreak e)$ returns on termination either a program~$P'$ correct
with respect to~$\varphi$ and~$\psi$ if $P$ can be repaired in $e$ or
otherwise throws an error.
\end{thm}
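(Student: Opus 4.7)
The plan is to split the claim into three parts: (i) if \PBRepair{} exits the loop normally, the returned $P^\prime$ is correct with respect to $\varphi$ and $\psi$; (ii) the algorithm throws only if $P$ truly cannot be repaired in~$e$; and (iii) the loop terminates after finitely many iterations. Part~(i) is immediate from the specification of \textsf{ModelCheck}: the loop condition fails exactly when no counterexample remains, i.e., when $P^\prime$ is correct with respect to $\varphi$ and $\psi$.

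For part~(iii) I would first lift Lemma~\ref{ref:iter} from a single counterexample to the whole accumulated set~$\Pi$. Because $\Phi$ is a conjunction with one implication per $\pi \in \Pi$ and the synthesized $\overline{T}$ make $\Phi[\overline{y}/\overline{T}]$ valid, the Hoare-triple decomposition used in the single-path proof applies independently per path, so that after \textsf{ApplyRepair} the updated program satisfies $\{\varphi\}\pi_A \cdot r \cdot \pi_B\{\psi\}$ for every $\pi = \pi_A \cdot e \cdot \pi_B \in \Pi$. Consequently, the next call to \textsf{ModelCheck} cannot return a path whose skeleton $(\pi_A,\pi_B)$ already occurs in $\Pi$. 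Since $e$ and its repair $r$ consist solely of assignment statements, replacing one by the other changes neither the control-flow graph of the program automaton nor its finite pool of control-flow paths, which is fixed by hypothesis. Hence $\Pi$ strictly grows at every non-terminating iteration and the loop stops after at most $|\Pi_{\max}|$ steps, where $\Pi_{\max}$ denotes the finite set of all control-flow paths of~$P$.

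For part~(ii), I would argue contrapositively. Suppose $P$ admits a repair in~$e$ by assignments $r^\ast : {\tt v_1 := T_1^\ast; \ldots v_n := T_n^\ast}$. For every control-flow path $\pi_A \cdot e \cdot \pi_B$ of $P$ the correctness of the repaired program yields the triple $\{\varphi\}\pi_A \cdot r^\ast \cdot \pi_B\{\psi\}$, which by Definition~\ref{def:hoare} implies validity of $sp(\varphi,\pi_A) \rightarrow wp(\psi,\pi_B)[\overline{v}/\overline{T}^\ast]$. Conjoining over the current $\Pi$ shows that $\Phi[\overline{y}/\overline{T}^\ast]$ is valid, so $\Phi$ is realizable at every iteration, and by completeness of \textsf{Synthesize} (Definition~\ref{def:synthesis}) the unrealizability branch is never taken. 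An error is therefore thrown only when no such $r^\ast$ exists.

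The main obstacle I foresee is the set-of-counterexamples generalization of Lemma~\ref{ref:iter}: one must verify that the intermediate conditions $\Gamma_1$ and $\Gamma_2$ from the single-path proof can be chosen independently per $\pi \in \Pi$ while a single synthesized $\overline{T}$ still discharges all the resulting implications simultaneously, which is precisely what validity of the conjunction $\Phi$ after substitution guarantees. A smaller but necessary subtlety, used in the termination step, is that replacing assignment statements by assignment statements preserves the control-flow graph, so that successive iterations draw their counterexamples from a common finite pool.
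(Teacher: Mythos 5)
Your proof follows essentially the same route as the paper's: correctness on normal exit comes from the specification of \textsf{ModelCheck}, termination comes from accumulating the verification conditions of all counterexamples (monotonicity) together with the finiteness of the pool of control-flow paths, and the error case is justified by completeness of the synthesis procedure. Your explicit lifting of Lemma~\ref{ref:iter} to the conjunction $\Phi$ over all of $\Pi$, and your observation that replacing assignments by assignments preserves the path skeletons so that all iterations draw counterexamples from one fixed finite pool, are exactly the unstated justifications behind the paper's terse ``monotonicity is enforced'' and ``finitely many control-flow paths'' steps; in those places you are more careful than the paper. The one case you omit is the error exit taken when a counterexample $\pi$ does \emph{not} traverse the fault region~$e$ at all --- the paper's proof handles this explicitly (``If the fault region~$e$ is not contained in $\pi$, then \PBRepair{} terminates with an error\ldots This may happen when multiple faults are considered''). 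Your part~(ii) only justifies the unrealizability error, so to fully establish that an error is thrown only when $P$ cannot be repaired in~$e$ you would additionally need to note that a counterexample avoiding $e$ is itself a witness of irreparability, since such a path is unaffected by any replacement confined to~$e$ and hence remains a counterexample in every candidate repair. This is a one-line addition in the same contrapositive style you already use, so the omission is minor.
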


\begin{proof}
Since~$P$ is faulty with respect to the specification, model checking
in the first step produces a counterexample~$\pi$ on termination.  If
the model checking procedure does not terminate, then \PBRepair{} does
not terminate.  If the fault region~$e$ is not contained in $\pi$,
then \PBRepair{} terminates with an error indicating that~$P$ cannot
be repaired within the fault region~$e$.  This may happen when
multiple faults are considered.  Otherwise, a synthesis procedure is
invoked to repair the program~$P$ at~$e$.  If the procedure does not
terminate, \PBRepair does not terminate. If the synthesis procedure
reports unrealizability of the specification, no repair exists to make
$P$ correct in $e$ and \PBRepair{} throws an error.  Otherwise,
according to Lemma~\ref{ref:iter} program terms~$\overline{T}$ are
synthesized such that the counterexample is removed.  In each
iteration, at least one counterexample is removed.  Since the
verification conditions of the counterexamples are accumulated,
monotonicity is enforced, i.e., previously corrected counterexamples
cannot become faulty again.  Since~$P$ has only finitely many
control-flow paths, \PBRepair{} terminates after finitely many
iterations if for all iterations, model checking and synthesis
terminate.  The finally produced program has no counterexamples and
thus is correct with respect to $\varphi$ and $\psi$.
\end{proof}

\section{Implementation and Experimental Results}\label{sec:eval}
The repair framework presented in the previous section is generic in
the sense that the framework can be instantiated with different model
checkers and synthesis procedures abstracting from programming and
specification languages.  In this section, we present a prototype
implementation of the repair framework for ANSI-C utilizing domain
finitizing and give some initial experimental results indicating that
path-based program repair can be useful for repairing real programs.

The specification, i.e., the precondition and postcondition, and all
other logic formul\ae{} are expressed in the SMT-LIB2 logic QF\_BV,
i.e., the quantifier-free fragment of first-order logic modulo
bit-vector arithmetic.  The manipulation of logic formul\ae{} for
computing weakest preconditions and strongest postconditions has been
implemented using the API of the theorem prover Z3~\cite{Moura08}.

Symbolic counterexamples are computed by leveraging
CBMC~\cite{Clarke04} in combination with a self-im\-ple\-men\-ted
execution tracer.  CBMC model checks the program with respect to the
given specification.  When verification fails, an input assignment is
extracted from CBMC's logfile.  The execution tracer then re-simulates
the program with this input assignment and dumps the statements
executed in a textual representation similar to
\Fig~\ref{Fig:Propagation}.

For synthesis, logic formul\ae{} are bit-blasted to Boolean functions,
more particularly \emph{And-Inverter Graphs}~(AIGs), by replacing each
word-level variable by individual bit-level variables and each
bit-vector operator by a corresponding Boolean circuit.  After
bit-blasting, a BDD-based synthesis procedure is applied to obtain a
gate-level repair for the program.  The synthesis procedure is
complete, guarantees termination, and thus detects unrealizability.
Basing the main synthesis work on BDDs has many advantages --- they
can perform the quantifier elimination step needed for synthesis in a
natural and efficient way.  Also, the question of how to compute an
implementation from an input/output relation that is represented as a
Boolean function is well-researched, so that we can apply this work.

In the last step, the gate-level repair is transformed to ANSI-C code in
a straight-forward way: for each circuit gate~$o = AND(a,b)$, a fresh
variable~$o$ is introduced and assigned to the ANSI-C
expression~$(a~\&~b)$, where $a$ and $b$ are either other variables
introduced by this conversion or input bits extracted from existing
word-level program variables.

\Fig~\ref{Fig:minmax} shows {\tt minmax}, a simple fragment of an
ANSI-C program that determines the largest and the smallest value of
three given inputs.  All variables in the program fragment are of
integer type.  The logic specification $\varphi$ and $\psi$ annotated
to the source code is complete, so that all possible faults are
observable during model checking and can be repaired by our approach
assuming the ``right'' fault region is provided as input.  To improve
scalability of synthesis, the bit-widths of integer variables are
reduced to 2~bit.

\begin{figure}[t]
  \centering
  \scriptsize
  \begin{minipage}{0.58\columnwidth}
    \centering
    \begin{Verbatim}[commandchars=\\\{\},codes={\catcode`\$=3\catcode`\^=7\catcode`\_=8}]
      \PY{c+c1}{// $\mathbf{\varphi :\Leftrightarrow \{true\}}$}
      \PY{l+m+mf}{1.} \PY{n}{most} \PY{o}{=} \PY{n}{input1}\PY{p}{;}
      \PY{l+m+mf}{2.} \PY{n}{least} \PY{o}{=} \PY{n}{input1}\PY{p}{;}
      \PY{l+m+mf}{3.} \PY{k}{if} \PY{p}{(}\PY{n}{most} \PY{o}{\PYZlt{}} \PY{n}{input2}\PY{p}{)}
      \PY{l+m+mf}{4.}   \PY{n}{most} \PY{o}{=} \PY{n}{input2}\PY{p}{;}
      \PY{l+m+mf}{5.} \PY{k}{if} \PY{p}{(}\PY{n}{most} \PY{o}{\PYZlt{}} \PY{n}{input3}\PY{p}{)}
      \PY{l+m+mf}{6.}   \PY{n}{most} \PY{o}{=} \PY{n}{input3}\PY{p}{;}
      \PY{l+m+mf}{7.} \PY{k}{if} \PY{p}{(}\PY{n}{input2} \PY{o}{\PYZlt{}} \PY{n}{least}\PY{p}{)}
      \PY{l+m+mf}{8.}   \PY{n}{least} \PY{o}{=} \PY{n}{input2}\PY{p}{;}
      \PY{l+m+mf}{9.} \PY{k}{if} \PY{p}{(}\PY{n}{input3} \PY{o}{\PYZlt{}} \PY{n}{least}\PY{p}{)}
      \PY{l+m+mf}{10.}   \PY{n}{least} \PY{o}{=} \PY{n}{input3}\PY{p}{;}
      \PY{c+c1}{// $\mathbf{\psi :\Leftrightarrow \{}$}
      \PY{c+c1}{// $\mathbf{  (most=input1 \vee most=input2 \vee most=input3) \wedge}$}
      \PY{c+c1}{// $\mathbf{  (most\geq{}input1 \wedge most\geq{}input2 \wedge most\geq{}input3) \wedge}$}
      \PY{c+c1}{// $\mathbf{  (least=input1 \vee least=input2 \vee least=input3) \wedge}$}
      \PY{c+c1}{// $\mathbf{  (least\leq{}input1 \wedge least\leq{}input2 \wedge least\leq{}input3) \}}$}
    \end{Verbatim}
  \end{minipage}
  \caption{{\tt minmax} program}
  \label{Fig:minmax}
\end{figure}

In order to allow repairing conditional statements of form
\texttt{if(c)\{...\}}, where the guard condition~$c$ may be a complex
or compound expression, in a preprocessing step, the conditional
statement is replaced by \texttt{t = c; if(t)\{...\}}, where $t$ is a
new temporary program variable.

Table~\ref{Tab:PBRepair} lists some initial experiments, where faults
have been seeded into {\tt minmax} and \PBRepair{} is applied to
repair them.  The table is built as follows: each line corresponds to
one seeded fault.  The first column shows the line number in which a
fault was seeded, the second column lists the type of the erroneous
statement, the third column gives the number of iterations needed
by~\PBRepair{} to terminate and the fourth column lists the examined
control-flow paths as bit strings for all iterations and the size of
the corresponding candidate repair counted in AND gates in squared
brackets.  Each bit string $g_1 g_2 g_3 g_4$ denotes the evaluation of
the guard conditions, where $g_1$, $g_2$, $g_3$, $g_4$ correspond to
the code lines 3, 5, 7, 9, respectively.  The value $0$ and $1$
indicate that the respective guard condition evaluated to \FALSE and
\TRUE, respectively, when executed. The last column gives the run-time
in seconds.  All experiments have been conducted on Intel(R) Core(TM)
i5-2520M CPU @ 2.50GHz with 8GB RAM.  The run-time was mainly spend in
synthesizing the repair and the time required for model checking was
negligible.

\begin{table}[t]
  \centering
  \small
  \begin{tabular}{rcclc}
    \toprule
    Line & Type & Iterations & Control-Flow Path [AND gates] & Time [s]\\
    \midrule
     1 & Assignment & 7 & 0000[2] 0100[2] 0011[16] 0110[17] 1010[12] 0010[18] 0001[7] & 2 \\
     2 & Assignment & 4 & 1001[2] 0100[3] 0000[2] 1000[2] & 3\\
     3 & Condition  & 4 & 1110[1] 0000[3] 1010[6] 0001[8] & 4\\
     4 & Assignment & 2 & {1}000[3] {1}001[6] & 3\\
     5 & Condition  & 2 & 0101[1] 0000[5] & 6\\
     6 & Assignment & 2 & 0{1}10[3] 0{1}00[6] & 4\\
     7 & Condition  & 5 & 0001[1] 1010[3] 0100[9] 0000[11] 1110[8] & 4\\
     8 & Assignment & 3 & 01{1}1[2] 00{1}0[5] 01{1}0[4] & 4\\
     9 & Condition  & 6 & 0111[1] 0000[1] 1001[4] 0011[8] 1000[18] 0010[20] & 5\\
    10 & Assignment & 2 & 001{1}[2] 000{1}[14] & 4\\
    \bottomrule
  \end{tabular}
  \caption{Path-Based Program Repair applied to {\tt minmax}}
  \label{Tab:PBRepair}
\end{table}

The repair framework proposed is fully automated and does not need any
human intervention.  Our initial experiments indicate that \PBRepair{}
can be used for repairing simple ANSI-C programs; i.e., the prototype
implementation of \PBRepair{} proposed was able to determine a repair
for each of our seeded faults in only a few seconds.  However, before
applying our repair procedure to {\tt minmax}, the bit-width of
integers was manually reduced.  Otherwise, quantification on up to 160
BDD variables is necessary which is challenging for
today's BDD-based procedures.  We claim that automated bit-width
abstraction refinement for synthesis is in reach such that
a synthesize-and-generalize approach is possible: first bit-widths are
abstracted to a small number of BDD variables, a repair is synthesized
from the abstraction, generalized to the full bit-widths, and
subsequently verified considering the context of the program.

\section{Conclusion}\label{sec:conclusion}

In this paper, we presented a path-based abstraction refinement
approach to program repair which combines symbolic path reasoning and
software synthesis.  A prototype implementation of the repair
framework has been presented utilizing domain finitizing and BDD-based
synthesis.  In contrast to other synthesis approaches, this allows for
deciding realizability.  Initial experimental results for our
prototype on a small ANSI-C program have been presented.

The repair framework uses off-the-shelf model checking and synthesis
tools, and thus inherits their scalability strength and barriers.  In
case of the BDD-based synthesis the limiting factor is the number of
input and output variables after bit-blasting, which were manually
reduced for our experiments.  We conjecture that a customized
bit-width abstraction refinement approach will substantially improve
scalability, while allowing to keep the completeness and
unrealizability detecting capabilities of BDD-based synthesis.  A
challenge that remains in this context is to foster readability of the
computed implementation parts.  We leave these improvements to future
work.

\textbf{Acknowledgements:} This work was supported by the German
Research Foundation (DFG, grant no. FE 797/6-1) and the Institutional
Strategy of the University of Bremen, funded by the German Excellence
Initiative.

\bibliographystyle{eptcs}
\bibliography{ref}

\begin{thebibliography}{10}
\providecommand{\bibitemdeclare}[2]{}
\providecommand{\surnamestart}{}
\providecommand{\surnameend}{}
\providecommand{\urlprefix}{Available at }
\providecommand{\url}[1]{\texttt{#1}}
\providecommand{\href}[2]{\texttt{#2}}
\providecommand{\urlalt}[2]{\href{#1}{#2}}
\providecommand{\doi}[1]{doi:\urlalt{http://dx.doi.org/#1}{#1}}
\providecommand{\bibinfo}[2]{#2}

\bibitemdeclare{inproceedings}{Alur13}
\bibitem{Alur13}
\bibinfo{author}{Rajeev \surnamestart Alur\surnameend},
  \bibinfo{author}{Rastislav \surnamestart Bod{\'i}k\surnameend},
  \bibinfo{author}{Garvit \surnamestart Juniwal\surnameend},
  \bibinfo{author}{Milo M.~K. \surnamestart Martin\surnameend},
  \bibinfo{author}{Mukund \surnamestart Raghothaman\surnameend},
  \bibinfo{author}{Sanjit~A. \surnamestart Seshia\surnameend},
  \bibinfo{author}{Rishabh \surnamestart Singh\surnameend},
  \bibinfo{author}{Armando \surnamestart Solar-Lezama\surnameend},
  \bibinfo{author}{Emina \surnamestart Torlak\surnameend} \&
  \bibinfo{author}{Abhishek \surnamestart Udupa\surnameend}
  (\bibinfo{year}{2013}): \emph{\bibinfo{title}{Syntax-Guided Synthesis}}.
\newblock In: {\sl \bibinfo{booktitle}{Formal Methods in Computer-Aided
  Design}}, pp. \bibinfo{pages}{1--17}, \doi{10.1109/FMCAD.2013.6679385}.

\bibitemdeclare{inproceedings}{Bloem12}
\bibitem{Bloem12}
\bibinfo{author}{Roderick~Paul \surnamestart Bloem\surnameend},
  \bibinfo{author}{Rolf \surnamestart Drechsler\surnameend},
  \bibinfo{author}{G{\"o}rschwin \surnamestart Fey\surnameend},
  \bibinfo{author}{Alexander \surnamestart Finder\surnameend},
  \bibinfo{author}{Georg \surnamestart Hofferek\surnameend},
  \bibinfo{author}{Robert \surnamestart K{\"o}nighofer\surnameend},
  \bibinfo{author}{Jaan \surnamestart Raik\surnameend}, \bibinfo{author}{Urmas
  \surnamestart Repinski\surnameend} \& \bibinfo{author}{Andre \surnamestart
  S{\"u}lflow\surnameend} (\bibinfo{year}{2012}):
  \emph{\bibinfo{title}{{FoREnSiC} --- An Automatic Debugging Environment for
  {C} Programs}}.
\newblock In: {\sl \bibinfo{booktitle}{Haifa Verification Conference}}, pp.
  \bibinfo{pages}{260--265}, \doi{10.1007/978-3-642-39611-3\_24}.

\bibitemdeclare{inproceedings}{Clarke04}
\bibitem{Clarke04}
\bibinfo{author}{Edmund \surnamestart Clarke\surnameend},
  \bibinfo{author}{Daniel \surnamestart Kroening\surnameend} \&
  \bibinfo{author}{Flavio \surnamestart Lerda\surnameend}
  (\bibinfo{year}{2004}): \emph{\bibinfo{title}{A Tool for Checking {ANSI-C}
  Programs}}.
\newblock In: {\sl \bibinfo{booktitle}{Tools and Algorithms for Construction
  and Analysis of Systems}}, pp. \bibinfo{pages}{168--176},
  \doi{10.1007/978-3-540-24730-2\_15}.

\bibitemdeclare{inproceedings}{Flanag02}
\bibitem{Flanag02}
\bibinfo{author}{Cormac \surnamestart Flanagan\surnameend} \&
  \bibinfo{author}{Shaz \surnamestart Qadeer\surnameend}
  (\bibinfo{year}{2002}): \emph{\bibinfo{title}{Predicate Abstraction for
  Software Verification}}.
\newblock In: {\sl \bibinfo{booktitle}{Symposium on Principles of Programming
  Languages}}, pp. \bibinfo{pages}{191--202}, \doi{10.1145/503272.503291}.

\bibitemdeclare{inproceedings}{Hamza10}
\bibitem{Hamza10}
\bibinfo{author}{Jad \surnamestart Hamza\surnameend}, \bibinfo{author}{Barbara
  \surnamestart Jobstmann\surnameend} \& \bibinfo{author}{Viktor \surnamestart
  Kuncak\surnameend} (\bibinfo{year}{2010}): \emph{\bibinfo{title}{Synthesis
  for Regular Specifications over Unbounded Domains}}.
\newblock In: {\sl \bibinfo{booktitle}{Formal Methods in Computer-Aided
  Design}}, pp. \bibinfo{pages}{101--109}.

\bibitemdeclare{inproceedings}{Jacobs13}
\bibitem{Jacobs13}
\bibinfo{author}{Swen \surnamestart Jacobs\surnameend}, \bibinfo{author}{Viktor
  \surnamestart Kuncak\surnameend} \& \bibinfo{author}{Philippe \surnamestart
  Suter\surnameend} (\bibinfo{year}{2013}): \emph{\bibinfo{title}{Reductions
  for Synthesis Procedures}}.
\newblock In: {\sl \bibinfo{booktitle}{Verification, Model Checking and
  Abstract Interpretation}}, pp. \bibinfo{pages}{88--107},
  \doi{10.1007/978-3-642-35873-9\_8}.

\bibitemdeclare{inproceedings}{Jose11}
\bibitem{Jose11}
\bibinfo{author}{Manu \surnamestart Jose\surnameend} \& \bibinfo{author}{Rupak
  \surnamestart Majumdar\surnameend} (\bibinfo{year}{2011}):
  \emph{\bibinfo{title}{Cause Clue Clauses: {E}rror Localization using Maximum
  Satisfiability}}.
\newblock In: {\sl \bibinfo{booktitle}{Programming Language Design and
  Implementation}}, pp. \bibinfo{pages}{437--446},
  \doi{10.1145/1993316.1993550}.

\bibitemdeclare{inproceedings}{Koenig11}
\bibitem{Koenig11}
\bibinfo{author}{Robert \surnamestart K{\"o}nighofer\surnameend} \&
  \bibinfo{author}{Roderick \surnamestart Bloem\surnameend}
  (\bibinfo{year}{2011}): \emph{\bibinfo{title}{Automated Error Localization
  and Correction for Imperative Programs}}.
\newblock In: {\sl \bibinfo{booktitle}{Formal Methods in Computer-Aided
  Design}}, pp. \bibinfo{pages}{91--100}.

\bibitemdeclare{inproceedings}{Kuncak10}
\bibitem{Kuncak10}
\bibinfo{author}{Viktor \surnamestart Kuncak\surnameend},
  \bibinfo{author}{Mika{\"e}l \surnamestart Mayer\surnameend},
  \bibinfo{author}{Ruzica \surnamestart Piskac\surnameend} \&
  \bibinfo{author}{Philippe \surnamestart Suter\surnameend}
  (\bibinfo{year}{2010}): \emph{\bibinfo{title}{Complete Functional
  Synthesis}}.
\newblock In: {\sl \bibinfo{booktitle}{Programming Language Design and
  Implementation}}, pp. \bibinfo{pages}{316--329},
  \doi{10.1145/1806596.1806632}.

\bibitemdeclare{article}{Kuncak13}
\bibitem{Kuncak13}
\bibinfo{author}{Viktor \surnamestart Kuncak\surnameend},
  \bibinfo{author}{Mika{\"e}l \surnamestart Mayer\surnameend},
  \bibinfo{author}{Ruzica \surnamestart Piskac\surnameend} \&
  \bibinfo{author}{Philippe \surnamestart Suter\surnameend}
  (\bibinfo{year}{2013}): \emph{\bibinfo{title}{Functional Synthesis for Linear
  Arithmetic and Sets}}.
\newblock {\sl \bibinfo{journal}{International Journal on Software Tools for
  Technology Transfer}} \bibinfo{volume}{15}(\bibinfo{number}{5--6}), pp.
  \bibinfo{pages}{455--474}, \doi{10.1007/978-3-662-44202-9\_15}.

\bibitemdeclare{inproceedings}{Moura08}
\bibitem{Moura08}
\bibinfo{author}{Leonardo \surnamestart de~Moura\surnameend} \&
  \bibinfo{author}{Nikolaj \surnamestart Bj{\o}rner\surnameend}
  (\bibinfo{year}{2008}): \emph{\bibinfo{title}{{Z3}: {A}n efficient {SMT}
  solver}}.
\newblock In: {\sl \bibinfo{booktitle}{Tools and Algorithms for Construction
  and Analysis of Systems}}, pp. \bibinfo{pages}{337--340},
  \doi{10.1007/978-3-540-78800-3\_24}.

\bibitemdeclare{inproceedings}{Riener12}
\bibitem{Riener12}
\bibinfo{author}{Heinz \surnamestart Riener\surnameend} \&
  \bibinfo{author}{G{\"o}rschwin \surnamestart Fey\surnameend}
  (\bibinfo{year}{2012}): \emph{\bibinfo{title}{Model-Based Diagnosis versus
  Error Explanation}}.
\newblock In: {\sl \bibinfo{booktitle}{Formal Methods and Models for
  Co-Design}}, pp. \bibinfo{pages}{43--52}, \doi{10.1109/MEMCOD.2012.6292299}.

\end{thebibliography}

\end{document}
